\def\BibTeX{{\rm B\kern-.05em{\sc i\kern-.025em b}\kern-.08em
    T\kern-.1667em\lower.7ex\hbox{E}\kern-.125emX}}
\newtheorem{Lemma}{Lemma}
\title{How Do Microstrip Losses Impact Near-Field Beam Depth in Dynamic Metasurface Antennas?}
\author{
    \IEEEauthorblockN{
        Panagiotis Gavriilidis and George C. Alexandropoulos
    }
    \IEEEauthorblockA{
 Department of Informatics and Telecommunications,
        National and Kapodistrian University of Athens, Greece\\
        e-mails: \{pangavr, alexandg\}@di.uoa.gr
    }
}
\begin{document}

\maketitle

\begin{abstract}

The convergence of eXtremely Large (XL) antenna arrays and high-frequency bands in future wireless networks will inevitably give rise to near-field communications, localization, and sensing. Dynamic Metasurface Antennas (DMAs) have emerged as a key enabler of the XL Multiple-Input Multiple-Output (MIMO) paradigm, leveraging reconfigurable metamaterials to support large antenna arrays. However, DMAs are inherently lossy due to propagation losses in the microstrip lines and radiative losses from the metamaterial elements, which reduce their gain and alter their beamforming characteristics compared to a lossless aperture. In this paper, we address the gap in understanding how DMA losses affect near-field beamforming performance, by deriving novel analytical expressions for the beamforming gain of DMAs under misalignments between the focusing position and the intended user's position in 3D space. Additionally, we derive beam depth limits for varying attenuation conditions, from lossless to extreme attenuation, offering insights into the impact of losses on DMA near-field performance.
\end{abstract}
\begin{IEEEkeywords}
Near field, beam focusing, beam depth, dynamic metasurface antennas, misalignment, XL MIMO.
\end{IEEEkeywords}

\section{Introduction} \label{Sec:Intro}
\let\thefootnote\relax\footnotetext{This work has been supported by the SNS JU project TERRAMETA under the EU's Horizon Europe research and innovation program under Grant Agreement No 101097101, including top-up funding by UKRI under the UK government's Horizon Europe funding guarantee.}

Future Sixth-Generation (6G) wireless networks aim to support demanding applications, such as ultra-high data rate communications, precise localization, and environmental sensing, which impose stringent requirements on spectral efficiency, latency, and energy consumption \cite{HMIMO_survey_et_al}. To meet these demands, millimeter-wave and higher frequency bands, combined with eXtremely Large (XL) aperture antenna arrays, are considered as key enablers. The synergy of these technologies, not only facilitates large-scale spatial multiplexing, but also ensures that the high beamforming gain of XL Multiple-Input Multiple-Output (MIMO) systems can compensate for the severe pathloss at high frequencies. This combination naturally leads to the emergence of near-field communications, which fundamentally differ from conventional far-field models that rely primarily on angular channel representations \cite{NF_tutorial}.

In the near-field regime, the spherical nature of wavefronts needs to be accounted for, as the conventional plane-wave approximation is no longer valid. This shift unlocks new degrees of freedom, enabling the service of users positioned at the same angular direction, thereby enhancing spatial multiplexing and interference management. However, the orthogonality of near-field beam focusing vectors
is fundamentally constrained by the user’s distance from the antenna array and the array’s physical dimensions~\cite{NF_Beam_tracking}. The near-field beamforming characteristics, and particularly, the depth of focus for planar arrays was introduced in~\cite{Bjornson_dist} 
for a user positioned along the normal vector from the array center. An analytical approximation of the orthogonality between two beam focusing vectors for linear arrays was derived in~\cite{ULA_NF_analysis} and later extended to planar arrays in~\cite{SDMA_vs_LDMA}, where a spherical-domain codebook for beamforming was also proposed. More recently,~\cite{NF_Beam_tracking} examined the near-field beam depth limits for a lossless Dynamic Metasurface Antenna (DMA), focusing on scenarios where the user remains within a constant 2D plane, and proposed a dynamic near-field sampling approach for localization, adapting to the beamforming resolution that the Base Station (BS) aims to achieve.

DMAs represent a promising alternative to conventional hybrid analog/digital MIMO architectures, leveraging reconfigurable metamaterials to enable highly flexible and compact implementations of XL MIMO arrays~\cite{DMA_Magazine}. Unlike hybrid MIMO systems, which rely on power-hungry phase shifters and power splitters, DMAs operate via microstrip lines, where embedded metamaterial elements dynamically modify their electromagnetic properties to achieve high-resolution beamforming, while maintaining reduced power consumption. These advantageous characteristics have recently fueled growing research interest in the field~\cite{gavras2023duplex, DMA_near_field_channel, DMA_DL_beamforming_TWC22}. However, due to their intrinsic structure, DMAs introduce additional challenges, particularly in terms of propagation and radiative losses along the microstrip lines, which can degrade their beamforming performance. The authors in~\cite{DMA_effective_aperture} studied the far-field beamforming characteristics of such waveguide-fed metasurfaces and derived an analytical expression for the attenuation coefficient within the microstrips.
Motivated by the growing research interest in DMAs and their potential to unlock near-field communications, this paper focuses on quantifying the effects of DMA losses on near-field beamforming performance. We first derive analytical approximations for the deterioration of the near-field beamforming gain due to misalignments in the range coordinate. Subsequently, we establish the beam depth limits, for which a certain percentage of the maximum beamforming gain is lost, analytically assessing the impact of losses on both.

\vspace{-2.5mm}
\section{Modeling of Lossy DMAs}\label{Sec: System Model}

We consider a DMA-equipped BS communicating with a single-antenna User Equipment (UE). The DMA consists of $N_m$ microstrips, each fed by a single RF chain and comprising $N_e$ metamaterial elements, resulting essentially in a total of $N = N_m N_e$ antenna elements. Furthermore, let $s$ denote the unit-power complex-valued information symbol transmitted via the linearly beamformed vector:
\(\mathbf{x} = \bar{\mathbf{Q}} \mathbf{v} s \in \mathbb{C}^{N \times 1}\),
where $\bar{\mathbf{Q}} \in \mathbb{C}^{N \times N_m}$ and $\mathbf{v} \in \mathbb{C}^{N_m \times 1}$ represent the DMA’s analog and digital beamformers, respectively. The latter is subject to a power restriction $\|\mathbf{v}\|_2^2 \leq P_b$, where $P_b$ denotes the maximum transmit power budget of the BS. The DMA analog beamformer is defined as
\(\bar{\mathbf{Q}} \triangleq \mathbf{P}_m \mathbf{Q}\),
where the diagonal matrix $\mathbf{P}_m \in \mathbb{C}^{N \times N}$ models the propagation of the signal inside each microstrip, while $\mathbf{Q} \in \mathbb{C}^{N \times N_m}$ contains the tunable responses of the metamaterial elements that follow the Lorentzian-constrained profile, according to which, for each $n$-th element (\(n=0,1,\ldots,(N_e-1)\)) in each $i$-th microstrip (\(i=0,1,\ldots,(N_m-1)\)), the following holds:
\begin{equation}\label{eq:Q_matrix}
[\mathbf{Q}]_{iN_{e}+(n+1),j}=\begin{cases} q_{i,n}\in \mathcal{Q},&i=j\\0,&i\neq j\end{cases}
\end{equation}
with $\mathcal{Q}\triangleq\left\{0.5\left(\jmath + e^{\jmath\phi}\right) | \phi \in \left[0,2\pi\right]\right\}$ \cite[eq.~(30)]{DMA_effective_aperture}. In this paper, we model the amplitude and phase distortion due to signal propagation within the microstrip, before reaching each metamaterial element, as follows:
\begin{equation}
[\mathbf{P}_m]_{iN_e+ (n+1), iN_e+(n+1)} =  \frac{\exp \left( -\alpha \rho_{i,n} - \jmath \beta \rho_{i,n} \right)}{\sqrt{N_e}},
\end{equation}
where $\beta$ and \(\alpha\) are the wavenumber and attenuation coefficients of the microstrip, respectively, while $\rho_{i,n}$ represents the distance of the $n$-th element in the $i$-th microstrip from the input port, which is given by \(\rho_{i,n} = n d_e\). 
Unlike conventional downlink modeling of DMAs in~\cite{gavras2023duplex,DMA_near_field_channel,DMA_DL_beamforming_TWC22}, we have introduced the normalization term \((\sqrt{N_e})^{-1}\) to ensure compliance with the power conservation principle. Note that, without this term, the radiated power \(||\mathbf{\bar{Q}v}||_2^2\) would be artificially amplified by a factor of \(N_e\), since each column of \(\mathbf{\bar{Q}}\) would exhibit a squared norm on the order of \(N_e\). 
From a physical standpoint, this normalization assumes that each metamaterial element within a microstrip experiences the same overall attenuation due to radiative losses, that is, power escaping the transmission line as it is radiated by the other elements, regardless of its position relative to the input port. In reality, however, radiation losses affect the metamaterial elements non-uniformly \cite{DMA_effective_aperture}. A more detailed analysis could account for the spatially varying effects of radiation losses, but modeling this in a way that ensures compliance with the energy conservation principle is beyond the scope of this paper. Herein, the attenuation coefficient \(\alpha\) is actually assumed to primarily capture the conductor and dielectric losses in the microstrip line, as modeled in \cite[eqs.~(3.198) and (3.199)]{pozar2012microwave}. However, since our derivations remain parametric in \(\alpha\), they are applicable to a broader range of scenarios beyond these loss mechanisms.


The BS is assumed to be centered at the origin $(0,0,0)$, with the DMA surface aligned with the $zy$-plane. Considering a Line-of-Sight (LoS) channel between the BS and the UE, the baseband received signal at the UE is given by:
\begin{equation}
y \triangleq \mathbf{h}_{\rm LoS}^{\rm H} \mathbf{x} + n,
\end{equation}
where \(n\) is the Additive White Gaussian Noise (AWGN) and $\mathbf{h}_{\rm LoS}\in \mathbb{C}^N$ is the LoS channel vector defined as:
\([\mathbf{h}_{\rm LoS}]_{iN_e + (n+1)} \triangleq \frac{\lambda}{4\pi r_{i,n}} e^{ -\jmath \frac{2\pi}{\lambda} r_{i,n} }.
\)
Here, $r_{i,n}$ represents the distance from the $n$-th element of the $i$-th microstrip to the UE. Assuming the UE is located at position $(r,\phi,\theta)$ and using the Fresnel approximation \cite{NF_tutorial}, the pathloss term over the DMA's aperture can be considered as constant, thus, the LoS channel can be reformulated as \(\mathbf{h}_{\rm LoS}=\frac{\lambda}{4 \pi r_{0}}\mathbf{a}(r,\phi,\theta)\) with \([\mathbf{a}(r,\phi,\theta)]_{iN_e+ (n+1)}\triangleq e^{-j \frac{2 \pi}{\lambda} r_{i,n}} \) being the focusing vector, \(r_0\) the distance from the DMA's center, and \(r_{i,n}\) given in~\eqref{eq: r_in distance from BS}.

\begin{figure*}
\begin{align}\label{eq: r_in distance from BS}
r_{i,n} = & \,\sqrt{(r \sin(\theta) \cos(\phi))^2 + (r \cos(\theta) - n_z d_e)^2 + (r \sin(\theta) \sin(\phi) - i_y d_m)^2}\\
r_{i,n} \cong&\, r - n_z d_e \cos(\theta) - i_y d_m \sin(\theta)\sin(\phi) + \frac{n^2 d^2}{2 r} (1 - \cos^2 (\theta)) + \frac{i_y^2 d_m^2}{2 r} (1 - \sin^2(\theta) \sin^2(\phi)) - \frac{n_z i_y d_e d_m \cos(\theta) \sin(\theta) \sin(\phi)}{r}\nonumber\\
n_z\triangleq & \, n- 0.5(N_e-1),\nonumber
\; i_y\triangleq \, i- 0.5(N_m-1)
\end{align}
\hrule
\vspace{-5mm}
\end{figure*}

To maximize the signal-to-noise ratio at a specific UE position $(r,\phi,\theta)$, the analog beamformer at the DMA is set to align with the phase of $\mathbf{a}^{\rm H}(r,\phi,\theta)$~\cite[Lemma 1]{DMA_loc_Nir}. Furthermore, according to \cite[Proposition~1]{NF_Beam_tracking}, if this selection is made for the analog beamformer, then the optimized digital beamformer reduces to a vector of ones scaled appropriately to meet the power constraint. 
Therefore, the optimal hybrid analog and digital DMA configuration is given as:
\begin{equation}\label{eq: Hybrid Analog and Digital Config for Los focusing}
\mathbf{\bar{Q}}\mathbf{v}= 0.5\sqrt{\frac{P_b}{N}}\left(\mathbf{a}(r,\phi,\theta)\odot e^{-\alpha \boldsymbol{\rho}}  + \jmath \,e^{-(\alpha  +\jmath \beta) \boldsymbol{\rho}}\right),
\end{equation}
where the symbol ``\(\odot\)'' refers to the element-wise multiplication of the involved vectors, and the term \(e^{-(\alpha +\jmath \beta) \boldsymbol{\rho}}\) models waveguide propagation within the microstrips and does not contribute to the beamforming gain. This expression represents the focusing vector created by the DMA. Due to attenuation within the microstrip lines, instead of the actual focusing vector towards the UE, the DMA essentially creates a decaying focusing vector which we denote as \(\mathbf{a}_{\rm DMA} \in \mathbb{C}^N\) and its elements are given as \([\mathbf{a}_{\rm DMA}(r,\phi,\theta)]_{iN_e+(n+1)} \triangleq \exp\left(\jmath \angle [\mathbf{h}_{\rm LoS}]_{iN_e+(n+1)}-\alpha \rho_{i,n}\right)\).

Let us define the beamforming gain as \( G \triangleq |\mathbf{a}^{\rm \rm H}(r,\phi,\theta)\mathbf{\bar{Q}}\mathbf{v} |^{2}\), which corresponds to the gain achieved at the UE's position (excluding the pathloss). As expected, the maximum gain \(G_{\rm opt}\) occurs for the case of a perfect alignment between the BS focusing position and the UE's position. Incorporating \eqref{eq: Hybrid Analog and Digital Config for Los focusing}, this gain is given as follows:
\begin{align}\label{eq: Gain for perfect alignment}
    & G_{\rm opt} = \frac{P_b}{4N} \left|N_m\sum_{n=0}^{(N_e-1)}e^{-\alpha n d_e} \right|^2\,\,\Rightarrow\nonumber\\
    & G_{\rm opt} = \frac{P_b N_m}{4N_e}\left( \frac{e^{-\alpha d_e N_e}-1}{e^{-\alpha d_e}-1}\right)^2.
\end{align}
For the lossless case, i.e., when \(\alpha=0\), the following limit is deduced: \(\lim_{\alpha \to 0}G_{\rm opt} = 0.25 P_b N_m N_e\), since \(\lim_{\alpha \to 0} \frac{e^{-\alpha d_e N_e}-1}{e^{-\alpha d_e}-1} = N_e\) holds.
Let us now define the parameter \(\eta \triangleq 1/N_e \frac{e^{-\alpha d_e N_e}-1}{e^{-\alpha d_e}-1}\), so that \(\eta N_e\) is the number of ``effective'' DMA elements, i.e., the number of elements that would achieve the same maximum gain if there were no losses. Clearly, the maximum gain is given by \(G_{\rm opt} = 0.25 P_b \eta^2 N\).

\section{Near-Field Beam Depth Analysis}
\subsection{Beamforming Gain for Range Mismatches}\label{subsec: Near field beamforming gain analysis}
When there is a mismatch in the range coordinate between the BS focus point and the actual UE position, we can compute the gain as:
\(
G(\hat{r},{\phi},{\theta}) \triangleq 0.25 \frac{P_b}{N} |\mathbf{a}^{\rm H}(r,\phi,\theta)\mathbf{a}_{\rm DMA}(\hat{r},{\phi},{\theta}) |^2.
\)
Furthermore, the relative beamforming gain, i.e., the achieved gain divided by the maximum gain, is expressed as follows:
\begin{equation}
\bar{G}(\hat{r},\hat{\phi},\hat{\theta}) \triangleq \frac{G(\hat{r},\hat{\phi},\hat{\theta})}{G_{\rm opt}} = \frac{|\mathbf{a}^{\rm H}(r,\phi,\theta)\mathbf{a}_{\rm DMA}(\hat{r},\hat{\phi},\hat{\theta}) |^{2}}{\eta^2 N^2 }.
\end{equation}
In this paper, we investigate the near-field beamforming limits arising from range mismatches, focusing specifically on the corresponding beam depth limits, where a certain percentage of the maximum beamforming gain is lost. Instead of analyzing the absolute beamforming gain, we study the relative beamforming gain function to examine the directivity characteristics across different microstrip attenuation conditions. In particular, variations in attenuation coefficients lead to different maximum beamforming gains \eqref{eq: Gain for perfect alignment}. By considering the relative beamforming gain, we can systematically determine the beam depth limits at which a specified fraction of the peak gain, regardless of its absolute value, is lost. To facilitate this analysis, we first present the following lemma including a novel analytical approximation for \(\bar{G}(\hat{r},\phi,\theta)\).

\begin{Lemma}\label{prop_Dr}
When the BS beam focuses at the point \((\hat{r}, \phi,\theta)\), where \(\hat{r}=r\pm \Delta r\) with \(\Delta r \geq 0\), the relative beamforming gain can be approximated as~\(\bar{G}(r\pm \Delta r,\phi,\theta) \cong\eta^{-2}\mathcal{K}^2\left(t_z(\pm \Delta r),w\right) \mathcal{D}^2(t_y(\pm \Delta r))\), where:
\begin{align}\label{eq: Analytical function range mismatch z-axis}
 \mathcal{K}(x,w) \triangleq & \frac{\sqrt{\pi}e^{-w}}{2x}\big|{\rm erfi}(e^{\jmath \pi/4}0.5x + e^{-\jmath \pi/4}w/x)- \nonumber\\
&{\rm erfi}(-e^{\jmath \pi/4}0.5x + e^{-\jmath \pi/4}w/x)\big|,\\
\mathcal{D}(x) \triangleq & \frac{1}{x}|C\left(x\right)+\jmath S\left(x\right)| .
\end{align}
In these expressions, \(C(\cdot)\) and \(S(\cdot)\) are the Fresnel functions, \({\rm erfi}(\cdot)\) is the imaginary error function \cite{table_of_integrals}, 
\(t_y(x) \triangleq d_m N_m \sqrt{ 0.5{(1 - \sin^2(\theta)\sin^2(\phi))}/{\lambda}\frac{2|x|}{r^2 + r x}} \), \(t_z(x)=d_e N_e\sqrt{\pi \sin^2(\theta)/\lambda\frac{|x|}{r^2 + r x}}\), and \(w \triangleq 0.5N_e d_e \alpha\).
\end{Lemma}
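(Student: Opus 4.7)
The plan is to start from $\bar{G}(\hat r,\phi,\theta)=|\mathbf{a}^{\rm H}(r,\phi,\theta)\mathbf{a}_{\rm DMA}(\hat r,\phi,\theta)|^{2}/(\eta^2 N^2)$ and expand the inner product as a double sum over the microstrip index $i$ and the element index $n$. Inserting the Fresnel approximation in~\eqref{eq: r_in distance from BS} for both $r_{i,n}$ and $\hat r_{i,n}$, the linear-in-$n_z$ and linear-in-$i_y$ contributions to the phase difference $\tfrac{2\pi}{\lambda}(r_{i,n}-\hat r_{i,n})$ cancel exactly, because they depend only on $\phi,\theta$, which are held fixed. What remains is a global phase $\mp\tfrac{2\pi}{\lambda}\Delta r$, which disappears under $|\cdot|^2$, plus a combination of $n_z^2$ and $i_y^2$ quadratic terms scaled by $\tfrac{1}{r}-\tfrac{1}{\hat r}=\pm\tfrac{\Delta r}{r\hat r}$, together with an $n_z i_y$ cross term that I will drop as a higher-order correction. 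This cross-term omission is the key enabling approximation, but is standard in the Fresnel regime, since its contribution is suppressed relative to the pure quadratic terms for the geometries under consideration.

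Once the cross term is omitted, the double sum factorizes as $\bigl(\sum_n e^{j\kappa_z n_z^2}e^{-\alpha n d_e}\bigr)\bigl(\sum_i e^{j\kappa_y i_y^2}\bigr)$, whose coefficients $\kappa_z,\kappa_y$ can be read off directly from~\eqref{eq: r_in distance from BS} and are precisely those absorbed into $t_z$ and $t_y$ after an appropriate rescaling. Replacing each finite sum by its corresponding Riemann integral is valid for large $N_e, N_m$; after a linear change of variables, the $i$-integral is the standard complex Fresnel integral $\int_{0}^{t_y}e^{j\pi\tau^{2}/2}d\tau = C(t_y)+jS(t_y)$, which, once divided by $N_m$, delivers the factor $\mathcal{D}(t_y)$.

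For the $n$-integral the presence of the real exponential decay $e^{-\alpha n d_e}$ alongside the complex Gaussian phase is handled by completing the square on the exponent $jt_z^2 u^2/4 - wu$ via the change of variable $s=e^{j\pi/4}(t_z/2)u$, after which the integrand becomes $e^{\xi^{2}}$ with $\xi = s \pm e^{-j\pi/4}w/t_z$. Since $\int e^{\xi^{2}}d\xi = \tfrac{\sqrt\pi}{2}\,\mathrm{erfi}(\xi)$, evaluating between the shifted limits produces exactly the difference of imaginary error functions that defines $\mathcal{K}$, with the prefactor $\tfrac{\sqrt\pi\,e^{-w}}{2t_z}$ arising from the Jacobian and from shifting the attenuation origin to the aperture center (so that $\alpha \rho_{i,n}$ becomes $w u + w$ up to a vanishing $\mathcal{O}(\alpha d_e)$ correction), while the residual unit-modulus factors $e^{jw^{2}/t_z^{2}}$ and $e^{-j\pi/4}$ drop out under $|\cdot|$. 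Combining both factors, multiplying by $N_m N_e$, squaring, and dividing by $\eta^2 N^2$ yields the claimed approximation. I expect the main obstacle to be the $n$-integral, specifically the bookkeeping of the complex shift and verifying that all unit-modulus byproducts cancel so as to leave $\mathcal{K}$ in its stated form; justifying the omission of the $n_z i_y$ cross term and the continuum approximation are secondary concerns that can be absorbed into the same Fresnel-regime assumption already used in \eqref{eq: r_in distance from BS}.
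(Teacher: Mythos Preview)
Your proposal is correct and follows essentially the same route as the paper: drop the $n_z i_y$ bilinear term in the Fresnel expansion so the double sum factorizes, replace each sum by a Riemann integral (with $0.5(N_e-1)\cong 0.5N_e$), reduce the $i_y$-integral to the Fresnel functions $C,S$ giving $\mathcal{D}(t_y)$, and evaluate the $n_z$-integral—which carries the real decay $e^{-\alpha n d_e}$—via the complex Gaussian antiderivative in terms of $\mathrm{erfi}$, with the leftover unit-modulus factor $e^{\mp\jmath w^{2}/t_z^{2}}$ disappearing under $|\cdot|$. The only cosmetic difference is that the paper invokes a tabulated integral for the last step where you complete the square explicitly, and it finishes by appealing to ${\rm erfi}^{*}(x)={\rm erfi}(x^{*})$ to collapse the $\pm$ sign dependence into the single expression defining $\mathcal{K}$.
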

\begin{proof}
Following \cite[Lemma~3]{SDMA_vs_LDMA}, we omit the bilinear term of \eqref{eq: r_in distance from BS} in the beamforming gain derivations, yielding the formula:
\begin{align}
    & \frac{\mathbf{a}^{\rm H}(r,\phi,\theta)\mathbf{a}_{\rm DMA}(\hat{r},\hat{\phi},\hat{\theta}) }{N } =\nonumber\\
    &\frac{1}{N}\sum_{n_z=-0.5(N_e-1)}^{0.5(N_e-1)}\sum_{i_y=-0.5(N_m-1)}^{0.5(N_m-1)} e^{\jmath \pi/\lambda n_z^2 d_e^2 \sin^2(\theta)\left(\frac{1}{r}-\frac{1}{r \pm \Delta r}\right) }\nonumber\\
    & \times e^{\jmath \pi/\lambda i_y^2 d_m^2(1- \sin^2(\theta)\sin^2(\phi))\left(\frac{1}{r}-\frac{1}{r \pm \Delta r}\right)- (n_z+0.5(N_e-1))d_e\alpha}.\nonumber
\end{align}
The above double sum can be approximated using the Riemmanian sum approximation, by performing an integration in place of the summation. To ease the analysis without compromising the results, we will consider that \(0.5(N_e-1)\cong0.5 N_e\) and \(0.5(N_m-1)\cong0.5 N_m\) \cite{SDMA_vs_LDMA}. The sum of the \(i_y\) index can be approximated following the steps in \cite[Lemma~4]{SDMA_vs_LDMA} and \cite[Lemma~1]{NF_Beam_tracking}. On the other hand, the terms of the sum over \(n_z\) contain a real part in the exponent, and therefore, require separate treatment. Thus, we use the indefinite integral in \cite[page~108 eq.~(13)]{table_of_integrals}, and after setting \(n' = n_z/N_e\), we write:
\begin{align}
    &\frac{e^{-0.5N_ed_e\alpha }}{N_e}\sum_{n'=-0.5}^{0.5} e^{\jmath \frac{\pi}{\lambda} n'^2 N_e^2 d_e^2 \sin^2(\theta)\left(\frac{1}{r}-\frac{1}{r \pm \Delta r}\right) - n'N_e d_e\alpha } \cong\nonumber \\
    & \frac{0.5\sqrt{\pi}e^{-w}}{e^{\pm \jmath \frac{\pi}{4}} t_z(\pm \Delta r)}\bigg({\rm erfi}\!\left(\!e^{\pm \jmath \frac{\pi}{4}}t_z(\pm \Delta r)0.5  + e^{\mp\jmath \frac{\pi}{4}}\frac{w}{t_z(\pm \Delta r)}\!\right) - \nonumber\\
    &{\rm erfi}\!\left(\! -e^{\pm \jmath \frac{\pi}{4}}t_z(\pm \Delta r)0.5  + e^{\mp\jmath \frac{\pi}{4}}\frac{w}{t_z(\pm \Delta r)}\!\right)\!\!\bigg) \label{eq: Analytical erfi} e^{-\jmath \frac{w^2}{t_z^2(\pm \Delta r)}}.
\end{align}
Lastly, by taking the absolute value of \eqref{eq: Analytical erfi} and using the properties \({\rm erfi}^*(x)={\rm erfi}(x^*)\) and \(| {\rm erfi}(x) - {\rm erfi}(y)| = |{\rm erfi}^*(x) - {\rm erfi}^*(y)|\), the proof is concluded.  
\end{proof}

We can further approximate the gain expression from Lemma~1 to reduce its dependency to \(\mathcal{D}(t_y(\pm \Delta r))\), since this will later enable us to derive the beam depth limits for our DMA-based antenna array, in a closed-form fashion.  To do so, we will capitalize on the fact that DMAs consist of few microstrips compared to the numerous metamaterials per microstrip, i.e., typically holds \(N_e\gg N_m\), indicating their advantage to facilitate XL MIMO \cite{DMA_Magazine,NF_Beam_tracking}.
\begin{Lemma}\label{prop reduce dependency to y-axis}
If it holds that \(\frac{(N_m d_m)^2}{(N_e d_e)^3 }\to 0\), then, for finite positive \(\Delta r\), we can write \(\bar{G}(r\pm \Delta r,\phi,\theta) \cong\eta^{-2}\mathcal{K}^2\left(t_z(\pm \Delta r),w\right) \).
\end{Lemma}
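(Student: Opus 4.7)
The plan is to reduce the product in Lemma~1 by showing that \(\mathcal{D}^2(t_y(\pm\Delta r)) \to 1\) under the stated hypothesis, so the \(t_y\)-dependent factor drops out and the relative beamforming gain is governed by \(\mathcal{K}\) alone. The two ingredients I would need are (i) a small-argument expansion of \(\mathcal{D}\) and (ii) a verification that the hypothesis forces its argument to vanish in the appropriate asymptotic regime.

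The first step is to analyse \(\mathcal{D}(x) = |C(x)+\jmath S(x)|/x\) near the origin using the Maclaurin series of the Fresnel integrals, \(C(x) = x - \pi^2 x^5/40 + O(x^9)\) and \(S(x) = \pi x^3/6 - \pi^3 x^7/336 + O(x^{11})\). Substituting and simplifying, \(|C(x)+\jmath S(x)|^2 = x^2 + \pi^2 x^6/36 + O(x^{10})\), whence \(\mathcal{D}^2(x) = 1 + \pi^2 x^4/36 + O(x^8)\), which tends to \(1\) as \(x \to 0\) with an error decaying like the fourth power of the argument.

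The second step is to show that the hypothesis forces \(t_y(\pm\Delta r) \to 0\). Expanding the definition gives \(t_y^2(\pm\Delta r) = (N_m d_m)^2\,(1-\sin^2\theta\sin^2\phi)\,|\Delta r|/\bigl(\lambda\,r\,(r\pm\Delta r)\bigr)\). For a fixed geometric configuration \((r,\phi,\theta,\lambda)\) and finite positive \(\Delta r\), this is a bounded positive multiple of \((N_m d_m)^2\). In the XL MIMO regime of interest, \(N_e d_e\) remains on the order of the aperture scale, so the hypothesis \((N_m d_m)^2/(N_e d_e)^3 \to 0\) forces \((N_m d_m)^2 \to 0\) and hence \(t_y(\pm\Delta r) \to 0\); combining with the Taylor estimate above yields \(\mathcal{D}^2(t_y(\pm\Delta r)) \to 1\), and substitution into Lemma~1 gives the claim.

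The chief subtlety, more interpretive than technical, is to motivate the particular cubic exponent appearing in the denominator of the hypothesis. For pointwise convergence at a fixed \(\Delta r\), merely \((N_m d_m)^2 \to 0\) relative to the geometric constants would suffice; the cubic form is tailored so that \(\mathcal{D}^2 \approx 1\) holds uniformly over the characteristic near-field beam-depth window \(\Delta r \sim \lambda r^2/(N_e d_e)^2\) that will be exploited in the closed-form beam-depth derivations of the sequel. Beyond that bookkeeping step, I do not anticipate any serious technical obstacle in executing the proof.
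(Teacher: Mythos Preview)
Your high-level plan is sound: the reduction from Lemma~1 hinges entirely on showing \(\mathcal{D}^2(t_y(\pm\Delta r))\approx 1\), and a small-argument analysis of \(\mathcal{D}\) is the right tool. The gap is in your second step. You fix the geometry \((r,\phi,\theta,\lambda)\) and then claim that the hypothesis forces \((N_md_m)^2\to 0\). That inference does not hold: the hypothesis \((N_md_m)^2/(N_ed_e)^3\to 0\) is perfectly compatible with \((N_md_m)^2\) remaining bounded away from zero, or even growing, provided \((N_ed_e)^3\) grows faster---and this is precisely the XL-aperture regime the lemma is written for. With fixed \(r\) and unbounded \(N_ed_e\), your estimate of \(t_y^2\) as a bounded multiple of \((N_md_m)^2\) alone does not deliver \(t_y\to 0\).

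The paper closes this gap not by fixing \(r\) but by tying it to the aperture: it assumes that both \(r\) and \(r\pm\Delta r\) exceed the Fresnel distance \(0.62\sqrt{D^3/\lambda}\) with \(D\cong N_ed_e\), so that \(r(r\pm\Delta r)\gtrsim (N_ed_e)^3/\lambda\). Substituting this lower bound into the expression for \(t_y^2\) turns the requirement \(t_y\le 0.46\) (i.e., \(\mathcal{D}^2\ge 0.99\)) into \((N_md_m)^2/(N_ed_e)^3\lesssim 0.16/\Delta r\), which is exactly the stated hypothesis for finite \(\Delta r\). Thus the cubic exponent is not, as you conjecture, a uniformity device over the beam-depth window; it is an artifact of the Fresnel-distance scaling of \(r\) with aperture length. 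As a minor aside, your Taylor computation drops the cross term from \(C(x)^2\): the correct leading correction is \(\mathcal{D}^2(x)=1-\pi^2 x^4/45+O(x^8)\), though this does not affect your qualitative conclusion.
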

\begin{proof}
For less than \(1 \%\) error in the approximation, we set \(\mathcal{D}^2(t_y(\pm \Delta r))^2 \geq 0.99\), which yields \(t_y(\pm \Delta r) \leq 0.46\). Then, we can equivalently write:
\(
    N_m^2d_m^2 \leq \frac{0.46^2 2 r(r\pm \Delta r) \lambda }{ \Delta r (1 - \sin^2(\theta)\sin^2(\phi)) }\Rightarrow  N_m^2d_m^2 \leq\frac{0.46^2 2 r(r\pm \Delta r) \lambda }{ \Delta r  }.
\)
We further make the following assumptions: \textit{i}) \(r\) and \( r\pm \Delta r \geq 0.62 \sqrt{D^3/\lambda}\) (Fresnel distance) \cite{NF_Beam_tracking} with \(D\) being DMA's maximum length; and \textit{ii}) \(D\cong N_e d_e\), then the above inequality can be written as: \( \frac{N_m^2d_m^2}{(N_e d_e)^3} \leq \frac{0.16}{\Delta r}\), which concludes the proof, since \(\frac{N_m^2d_m^2}{(N_e d_e)^3} \to 0\).
\end{proof}
\begin{figure}
    \centering
    \includegraphics[width=\columnwidth]{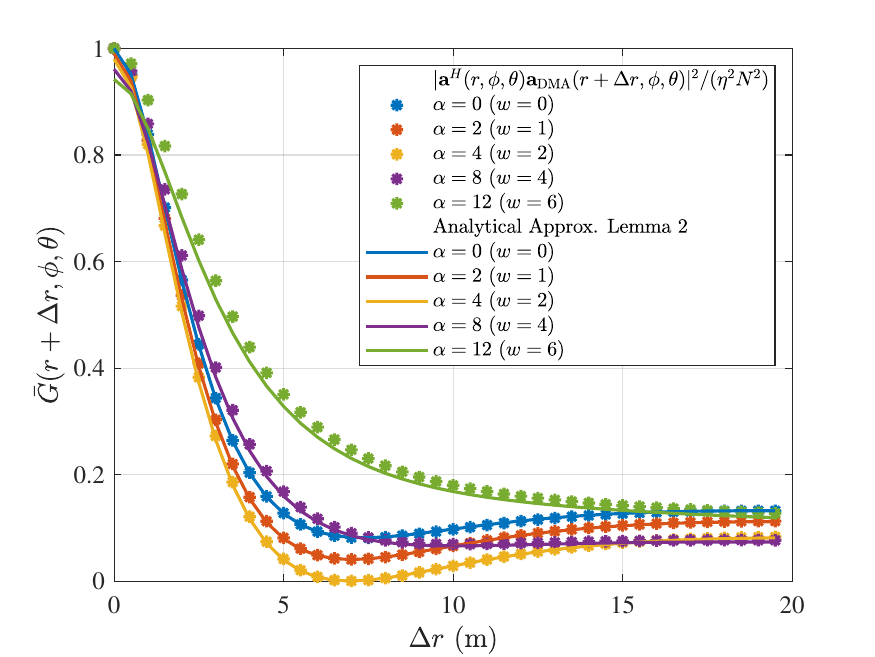}
    \caption{\(\bar{G}(r+\Delta r,\phi,\theta)\) with respect to range mismatch \(\Delta r\) for different attenuation coefficient values \(\alpha=[0,2,4,8,12]\), considering a DMA at the BS with $N_e=200$ and $N_m = 10$, $d_e=d_m=\lambda/2$ with $\lambda = 1$ cm,  \(r=7\) m, \(\phi = \pi/3\), and \(\theta = \pi/2\). The asterisks correspond to the numerical evaluation of \(\frac{|\mathbf{a}^{\rm H}(r,\phi,\theta)\mathbf{a}_{\rm DMA}(\hat{r},\hat{\phi},\hat{\theta})|^2 }{(\eta N)^2 }\), while the solid lines correspond to the analytical approximation of Lemma~2, i.e., \(\eta^{-2} \mathcal{K}^2(t_z(\Delta r), w)\). }
    \label{fig: Lemma validation G(r Delta r)}
    \vspace{-3mm}
\end{figure}

The analytical expression of the relative beamforming gain in Lemma~2 is validated numerically in Fig.~\ref{fig: Lemma validation G(r Delta r)} for varying attenuation coefficients. Specifically, we compare the numerical evaluation of \(\bar{G}(r+\Delta r,\phi,\theta)\) with its analytical counterpart \(\eta^{-2}\mathcal{K}(t_z(\Delta r),w)\), using the simulation parameters provided in the figure's caption. It is shown that the respective two curves exhibit strong agreement, with minor discrepancies for \(\alpha=12\). Notably, for \(\alpha=2\) and \(4\)
or \(w=1\) and \(2\), 
the beamforming gain slope is steeper compared to the lossless case, indicating that the former experiences larger gain loss for the same \(\Delta r\) mismatches. This can appear counterintuitive, as one might expect the reduced effective aperture of the DMA to result in broader beams. In the sequel, we elaborate on this phenomenon and determine up to which point increasing the attenuation makes the beamforming gain slope steeper. 

\vspace{-2mm}
\subsection{Beam Depth Limits versus Attenuation}\label{subsec: Beam depth derivation}
\vspace{-1mm}
Further inspection of Fig.~\ref{fig: Lemma validation G(r Delta r)} reveals that \(\mathcal{K}^2(t_z(\pm \Delta r),w)\) has a decreasing trend with \(\Delta r\) for all \(w\), while numerical evaluations with respect to \(t_z\) (figure omitted) show a strictly decreasing trend up to approximately \(t_z = 4.7\), after which it oscillates while decreasing. Leveraging this trend, we define the depth of focus as the region where the beamforming gain remains above \(100\delta\%\) of its maximum value, i.e., \(\bar{G}(\hat{r},\phi,\theta) \geq \delta,\,\delta \in (0,1),\) for \(\hat{r} \in [r - \Delta^{-}_{\delta}(r,w), r + \Delta^{+}_{\delta}(r,w)]\), where:
\begin{equation}\label{eq: Dr}
  \Delta^{\pm}_{\delta}(r,w) \triangleq r^2\left(\frac{d_e^2 N_e^2 \pi \sin^2(\theta)}{\lambda x_{\delta}^2(w)} \mp r\right)^{-1}.
\end{equation}
Here, \(x_\delta(w)\) denotes the point \(x\) where \(\mathcal{K}^2(x,w) = \delta \mathcal{K}^2(0,w)\), i.e., the location where the gain drops to \(100\delta\%\) of its maximum. The limits in \eqref{eq: Dr} are acquired by setting \(t_z(\pm \Delta r)=x_{\delta}(w)\) and solving for \(\Delta r\). It then follows that, as \(r \to \frac{N_e^2 d_e^2 \pi \sin^2(\theta)}{\lambda x_{\delta}^2(w)}\), \(\Delta^{+}_{\delta}(r,w) \to \infty\) is deduced. This resembles the limiting distance from which there ceases to exist a \(\Delta^{+}_{\delta}(r,w)\) factor, such that, for \(\hat{r} > r + \Delta^{+}_{\delta}(r,w)\), it holds \(\bar{G}(\hat{r},\phi,\theta) < \delta \), since, for \(r\) beyond this limit, the UE progressively transitions into the DMA’s far-field region. On the other hand, the solution \(\Delta^{-}_{\delta}(r,w)\) always exists, indicating the direction of movement towards the BS, i.e., towards its near-field zone. It is also noted that, when \(\sin^2(\theta)=0\), the beam depth cannot be defined since, for that case, it holds \(t_z(\pm \Delta r)=0\, \forall \Delta r\).

It is hard to solve \(\mathcal{K}^2(x,w) = \delta \mathcal{K}^2(0,w)\) analytically, hence, we determine \(x_{\delta}(w)\), for each pair of \(\delta\) and \(w\), numerically. While an exact analytical solution may not be feasible, approximating this function with a simple, yet efficient, model allows us to ease the analysis. In the following, we provide a closed-form expression that captures the essential behavior of the attenuation dependence, making it easier to analyze and gain insights into the impact of the attenuation coefficient on \(x_{\delta}(w)\), and consequently, the beam depth limits in~\eqref{eq: Dr}. 

\begin{figure}
    \centering
    \includegraphics[width= \columnwidth]{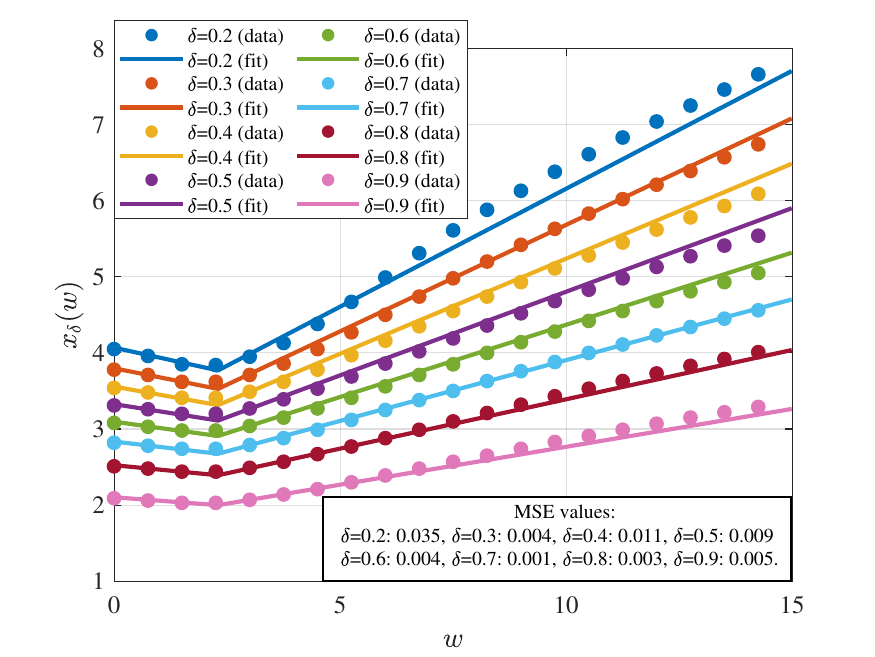}
    \caption{Numerically computed values of \(x_{\delta}(w)\) (circles) compared to the proposed approximation \(\hat{x}_{\delta}(w)\) (solid lines) for \(\delta=0.2:0.1:0.9\). The box in the bottom shows the MSE between \(x_{\delta}(w)\) and \(\hat{x}_{\delta}(w)\), for each \(\delta\) value, computed over the range \(w \in [0,15]\) and averaged over the sampled values.}
    \label{fig:piecewise linear approximation of x_delta}
    \vspace{-3mm}
\end{figure}

In Fig.~\ref{fig:piecewise linear approximation of x_delta}, we plot \(x_{\delta}(w)\) values for \(w \in [0,15]\), as values beyond this range hold no practical relevance. Recall that \(w=0\) corresponds to a lossless microstrip, while \(w=15\) represents an extreme case where the signal undergoes \(\exp(-15)\) (i.e., \(-130\) dB) attenuation at half the microstrip’s length; this is an unrealistic scenario requiring either exceptionally long or highly lossy transmission lines. For reference, a microstrip with a Duroid 5880 substrate \cite[Table~14.1]{balanis2016antenna}, a width of \(\lambda/2\), and operation at \(f_c=30\) GHz exhibits an attenuation coefficient of \(a=0.875\, {\rm m^{-1}}\), meaning its length would need to be \(34\) m long for the signal to reach the attenuation value \(\exp(-15)\) at its midpoint.

Based on our inspection of \(x_{\delta}(w)\)'s behavior in Fig.~\ref{fig:piecewise linear approximation of x_delta}, we propose the following piecewise-linear approximation:  
\begin{equation}\label{eq: piecewise linear fitting of x_delta(w)}
    \hat{x}_{\delta}(w) =
    \begin{cases} 
        x_{\delta}(0) + a_0(\delta) + a_1(\delta) w, & w < w_0, \\
        x_{\delta}(0) + b_0(\delta) + b_1(\delta) w, & w \geq w_0,
    \end{cases}
\end{equation}
where \(a_0(\delta) =0.02 - 0.007\delta\), \(a_1(\delta) =-0.154 + 0.121\delta\), \(b_0(\delta) = -1.186 + 0.963\delta\), \(b_1(\delta) = 0.370 - 0.301\delta\), \(w_0=2.3\), and \(x_{\delta}(0)\) is the point where \(\mathcal{K}^2(x,0)=\delta\mathcal{K}^2(0,0)\) (lossless case). The model parameters \(a_0(\delta),a_1(\delta),b_0(\delta), \) and \(b_1(\delta)\) are linear in \(\delta\) and were optimized via least squares fitting over all sampled instances of \(w\) and \(\delta\): \(
\min \sum_{w \in [0,15]} \sum_{\delta \in [0.2,1)} \|\hat{x}_{\delta}(w) - x_{\delta}(w)\|_2^2 + \sum_{\delta \in [0.2,1)}\|a_0(\delta) + a_1(\delta) w_0 - b_0(\delta) - b_1(\delta) w_0\|_2^2
\), where the latter term enforces continuity at \(w_0\). For \(\delta < 0.2\), \(x_{\delta}(w)\) exhibits highly nonlinear behavior and, while higher-order polynomials could fit the entire range \(\delta \in (0,1)\), they would complicate the analysis. In Fig.~\ref{fig:piecewise linear approximation of x_delta}, we plot the actual \(x_{\delta}(w)\) compared to our approximate expression \(\hat{x}_{\delta}(w)\) for \(\delta=0.2:0.1:0.9\). It is shown that very good alignment is achieved especially for \(w\leq 10\), while the Mean Squared Error (MSE) for all curves corresponding to different \(\delta\) is on the order of or less than \(10^{-2}\), as shown in the figure. 

It follows from~\eqref{eq: Dr} that smaller \(x_{\delta}(w)\) results in smaller \(\Delta^{\pm}_{\delta}(r,w)\) values, implying narrower beams. Our approximation \eqref{eq: piecewise linear fitting of x_delta(w)} reveals that, for \(w < 2.3\), \(x_{\delta}(w)\) decreases with \(w\) and becomes increasingly smaller compared to \(x_{\delta}(0)\), indicating increasingly directive beams in terms of beam depth limits. For \(w > w_0\), \(x_{\delta}(w)\) increases, surpassing \(x_{\delta}(0)\) at approximately \(w = 3.1\) for all \(\delta\), which can be found by setting \(b_0(\delta) + b_1(\delta)w=0\). Beyond this point, the beam progressively widens, losing its near-field focusing capability. This finding can also be validated from Fig.~\ref{fig: Lemma validation G(r Delta r)}, where it can be seen that the curve with \(w=4>3.1\) is less steep compared to the lossless curve \(w=0\), while \(w=1\) and \(w=2\) were steeper.

The phenomenon where attenuation up to \(w \cong3.1\) results in shorter beam depth limits than a lossless aperture, arises because the maximum gain is significantly lower in lossy scenarios. For instance, with \(w = 2\) in an \(N_e=200\) element array, the squared efficiency is \(\eta^2 = 0.06\) in \eqref{eq: Gain for perfect alignment}, meaning the maximum gain is only \(6\%\) of that of a lossless antenna. Consequently, while a percentage of the maximum gain is lost more rapidly, the absolute loss remains significantly lower.
 
In Fig.~\ref{fig: Beam depth limits wrt w}, we present the beam depth limits \(\Delta_{\delta}^{\pm}(r,w)\) for \(\delta = 0.9\) as a function of \(w\), using \eqref{eq: Dr} and our fitted model \(\hat{x}_{\delta}(w)\). Despite the approximations involved, the results are satisfactory: the computed \(\Delta^{\pm}_{0.9}(r,w)\) values yield relative beamforming gains (illustrated with the blue and red dashed lines) close to \(0.9\), with a maximum discrepancy of \(0.915\), or less than a \(2\%\) relative error. Notably, at \(w = 10.3\), \(r = \frac{d_e^2 N_e^2 \pi \sin^2(\theta)}{\lambda \hat{x}^2_{0.9}(w)}\), which aligns with the turning point where the red dashed line depicting \(\bar{G}(r+\Delta^{+}_{0.9}(r,w),\phi,\theta)\) begins to diverge beyond \(0.9\), as predicted in Section~\ref{subsec: Beam depth derivation}. After this point, \(\hat{x}_{0.9}(w)\) increases, leading to \(r > \frac{d_e^2 N_e^2 \pi \sin^2(\theta)}{\lambda \hat{x}^2_{0.9}(w)}\), meaning that no \(\Delta_{0.9}^{+}(r,w)\) beam depth limit exists. Additionally, as expected, for \(w > 3.1\), the beam depth limits \(\Delta^{\pm}_{0.9}(r,w)\) begin to become wider compared to the lossless limits, i.e., \(\Delta^{\pm}_{0.9}(r,w)\geq \Delta^{\pm}_{0.9}(r,0)\). Lastly, the slight peak of \(\bar{G}(r\pm \Delta^{\pm}_{0.9}(r,w),\phi,\theta)\) at \(w_0=2.3\) arises from the discontinuity error of \(\hat{x}_{0.9}(w)\), while the actual \(x_{0.9}(w)\) would exhibit a smoother transition maintaining the same trend.

\begin{figure}
    \centering
    \includegraphics[width=\columnwidth]{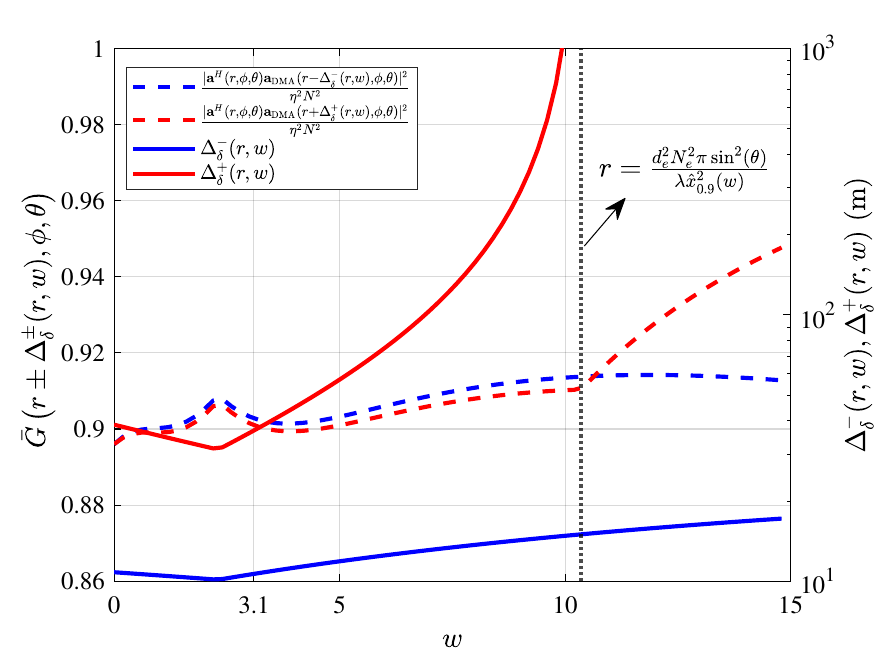}
    \caption{Right vertical axis: beam depth limits \(\Delta^{\pm}_{\delta}(r,w)\) with respect to \(w\) computed via \eqref{eq: Dr}; in their computation, our approximate function \(\hat{x}_{\delta}(w)\) was used to showcase its efficiency. Left vertical axis: the relative beamforming gain values at the beam depth limits. The DMA parameters are the same as in Fig.~\ref{fig: Lemma validation G(r Delta r)}, while \(r=30\), \(\phi=\pi/3\), \(\theta=\pi/3\), and \(\delta = 0.9\). At \(w=10.3\), \(r={N_e^2 d_e^2 \pi \sin^2(\theta)}/({\lambda \hat{x}_{\delta}^2(w)})\), which justifies that \(\Delta^{+}_{\delta}(r,w) \to \infty\) and, consequently, \(\bar{G}(r+\Delta^{+}_{\delta}(r,w),\phi,\theta)\) steadily diverges from \(\delta = 0.9\).}
    \vspace{-4mm}
    \label{fig: Beam depth limits wrt w}
\end{figure}

\vspace{-1.6mm}
\section{Conclusion}
\vspace{-1.4mm}
In this paper, we analyzed the impact of DMA losses on near-field beamforming performance. We derived analytical expressions for the beamforming gain under range misalignments, and established beam depth limits under varying attenuation conditions. Our results revealed that, while the losses reduce the maximum achievable gain, there is a threshold up to which increasing attenuation leads to steeper beams in terms of beam depth analysis. Beyond that point, the beam broadens, and the DMA progressively loses its near-field focusing capability. These findings provide valuable insights into our understanding of lossy DMA-based XL antenna arrays regarding near-field performance.
\vspace{-1.5 mm}
\bibliographystyle{IEEEtran}
\bibliography{references}

\begin{thebibliography}{10}
\providecommand{\url}[1]{#1}
\csname url@samestyle\endcsname
\providecommand{\newblock}{\relax}
\providecommand{\bibinfo}[2]{#2}
\providecommand{\BIBentrySTDinterwordspacing}{\spaceskip=0pt\relax}
\providecommand{\BIBentryALTinterwordstretchfactor}{4}
\providecommand{\BIBentryALTinterwordspacing}{\spaceskip=\fontdimen2\font plus
\BIBentryALTinterwordstretchfactor\fontdimen3\font minus \fontdimen4\font\relax}
\providecommand{\BIBforeignlanguage}[2]{{%
\expandafter\ifx\csname l@#1\endcsname\relax
\typeout{** WARNING: IEEEtran.bst: No hyphenation pattern has been}%
\typeout{** loaded for the language `#1'. Using the pattern for}%
\typeout{** the default language instead.}%
\else
\language=\csname l@#1\endcsname
\fi
#2}}
\providecommand{\BIBdecl}{\relax}
\BIBdecl

\bibitem{HMIMO_survey_et_al}
T.~Gong \emph{et~al.}, ``Holographic {MIMO} communications: {T}heoretical foundations, enabling technologies, and future directions,'' \emph{IEEE Commun. Surveys \& Tuts.}, vol.~26, no.~1, p. 196–257, 2024.

\bibitem{NF_tutorial}
Y.~Liu \emph{et~al.}, ``Near-field communications: {A} tutorial review,'' \emph{IEEE Open J. Commun. Society}, vol.~4, pp. 1999--2049, 2023.

\bibitem{NF_Beam_tracking}
P.~Gavriilidis and G.~C. Alexandropoulos, ``Near-field beam tracking with extremely large dynamic metasurface antennas,'' \emph{IEEE Trans. Wireless Commun.}, 2025, to appear. [Online]. Available: \url{https://doi.org/10.48550/arXiv.2406.01488}.

\bibitem{Bjornson_dist}
E.~Bj{\"o}rnson \emph{et~al.}, ``A primer on near-field beamforming for arrays and reconfigurable intelligent surfaces,'' in \emph{Proc. IEEE Asilomar Conf. Signals, Sys., and Comp.}, Pacific Grove, USA, 2021.

\bibitem{ULA_NF_analysis}
M.~Cui and L.~Dai, ``Channel estimation for extremely large-scale {MIMO}: {F}ar-field or near-field?'' \emph{IEEE Trans. Commun.}, vol.~70, no.~4, pp. 2663--2677, 2022.

\bibitem{SDMA_vs_LDMA}
Z.~Wu and L.~Dai, ``Multiple access for near-field communications: {SDMA} or {LDMA}?'' \emph{IEEE J. Sel. Areas Commun.}, vol.~41, no.~6, pp. 1918--1935, 2023.

\bibitem{DMA_Magazine}
N.~Shlezinger \emph{et~al.}, ``Dynamic metasurface antennas for \(6\){G} extreme massive {MIMO} communications,'' \emph{IEEE Wireless Commun.}, vol.~28, no.~2, pp. 106--113, 2021.

\bibitem{gavras2023duplex}
I.~Gavras \emph{et~al.}, ``Full duplex holographic {MIMO} for near-field integrated sensing and communications,'' in \emph{Proc. IEEE EUSIPCO}, Helsinki, Finland, 2023.

\bibitem{DMA_near_field_channel}
H.~Zhang \emph{et~al.}, ``Beam focusing for multi-user {MIMO} communications with dynamic metasurface antennas,'' in \emph{Proc. IEEE ICASSP}, Toronto, Canada, 2021.

\bibitem{DMA_DL_beamforming_TWC22}
S.~F. Kimaryo and K.~Lee, ``Downlink beamforming for dynamic metasurface antennas,'' \emph{IEEE Trans. Wireless Commun.}, vol.~22, no.~7, pp. 4745--4755, 2023.

\bibitem{DMA_effective_aperture}
D.~R. Smith \emph{et~al.}, ``Analysis of a waveguide-fed metasurface antenna,'' \emph{Phys. Rev. Appl.}, vol.~8, p. 054048, 2017.

\bibitem{pozar2012microwave}
D.~M. Pozar, ``Microwave engineering,'' \emph{Fourth Editions, University of Massachusetts at Amherst, John Wiley \& Sons, Inc}, pp. 26--30, 2012.

\bibitem{DMA_loc_Nir}
Q.~Yang \emph{et~al.}, ``Near-field localization with dynamic metasurface antennas,'' in \emph{Proc. IEEE ICASSP}, Rhodes, Greece, 2023.

\bibitem{table_of_integrals}
I.~S. Gradshteyn and I.~M. Ryzhik, \emph{Table of Integrals, Series, and Products}, 7th~ed.\hskip 1em plus 0.5em minus 0.4em\relax NY, USA: Academic Press (Elsevier), 2007.

\bibitem{balanis2016antenna}
C.~A. Balanis, \emph{Antenna theory: analysis and design}.\hskip 1em plus 0.5em minus 0.4em\relax Hoboken, NJ, USA: John Wiley \& Sons, 2016.

\end{thebibliography}

\end{document}